\newcommand\restartchapters{\par
  \setcounter{chapter}{0}%
  \setcounter{section}{0}%
  \gdef\@chapapp{\chaptername}%
  \gdef\thechapter{\@arabic\c@chapter}}
\newtheorem{theorem}{\bf {Theorem}}
\newtheorem{definition}{\bf {Definition}}
\newcommand{\tr}{{\mathtt{Tr}}}
\newcommand{\diag}{{\mathtt{diag}}}
\newcommand{\st}{{\mathrm{s.t.}}}
\newcommand{\AP}{\mathtt{AP}_{m}}
\newcommand{\RSI}{\mathtt{RSI}}
\newcommand{\APx}[1]{\mathtt{AP}_{#1}}
\newcommand{\ULU}{\mathtt{U}_{\ell}^{\mathtt{u}}}
\newcommand{\ul}{\mathtt{u}}
\newcommand{\SI}{\mathtt{SI}}
\newcommand{\AtoA}{\mathtt{AA}}
\newcommand{\dl}{\mathtt{d}}
\newcommand{\DLU}{\mathtt{U}_{k}^{\mathtt{d}}}
\newcommand{\DLUi}[1]{\mathtt{U}_{#1}^{\mathtt{d}}}
\newcommand{\cM}{\mathcal{M}}
\newcommand{\cK}{\mathcal{K}}
\newcommand{\cL}{\mathcal{L}}
\newcommand{\tZF}{\mathtt{ZF}}
\newcommand*{\hili}{\color{black}}
\newcommand*{\hilidraf}{\color{black}}
\g@addto@macro\normalsize{%
 \setlength\abovedisplayskip{1pt}
 \setlength\belowdisplayskip{1pt}
 \setlength\abovedisplayshortskip{1pt}
 \setlength\belowdisplayshortskip{1pt}
}
\newcommand{\subparagraph}{}
\def\set@curr@file#1{%
	\begingroup
	\escapechar\m@ne
	\xdef\@curr@file{\expandafter\string\csname #1\endcsname}%
	\endgroup
}
\def\quote@name#1{"\quote@@name#1\@gobble""}
\def\quote@@name#1"{#1\quote@@name}
\def\unquote@name#1{\quote@@name#1\@gobble"}
\begin{document}
\bstctlcite{IEEEexample:BSTcontrol}

\title{{\hili \huge A Novel Heap-based Pilot Assignment for Full Duplex Cell-Free Massive MIMO with Zero-Forcing}\vspace{-3pt}} 
\author{
	\IEEEauthorblockN{Hieu V. Nguyen$^{1}$, Van-Dinh Nguyen$^{3}$, Octavia~A.~Dobre$^{2}$, Shree Krishna Sharma$^{3}$, \\ Symeon Chatzinotas$^{3}$, Bj$\ddot{\text{o}}$rn Ottersten$^{3}$,  and Oh-Soon Shin$^{1}$ }\vspace{2pt}
	\IEEEauthorblockA{\normalsize $^{1}$School of Electronic Engineering \& Department of ICMC Convergence Technology, Soongsil University, Korea \\
		$^{2}$Faculty of Engineering and Applied Science, Memorial University, St. John's, NL, Canada\\
		$^{3}$SnT, University of Luxembourg, L-1855 Luxembourg City,
		Luxembourg \\ 
		\vspace{-40pt}}
	\thanks{This work was supported in part by the ERC project AGNOSTIC and  FNR ECLECTIC.}
}
\maketitle
\begin{abstract}
This paper investigates the combined benefits of  full-duplex (FD) and cell-free massive multiple-input multiple-output (CF-mMIMO), where a large number of distributed  access points (APs) having FD capability simultaneously serve numerous uplink and downlink user equipments (UEs) on the same time-frequency resources.
To enable the incorporation of FD technology in CF-mMIMO systems, we  propose a novel heap-based pilot assignment algorithm, which not only can mitigate the effects of pilot contamination but also reduce the involved computational complexity. Then, we formulate a robust design problem for spectral efficiency (SE) maximization in which the power control and AP-UE association are jointly optimized, resulting in a difficult mixed-integer nonconvex programming. To solve this problem, we derive a  more tractable problem before developing a very simple iterative algorithm based on inner approximation method with polynomial computational complexity. Numerical results show that our proposed methods with realistic parameters significantly outperform the existing approaches in terms of the quality of channel estimate and SE.
\end{abstract}


\section{Introduction} \label{Introduction} 

In-band full-duplex (FD) has been envisaged as a key enabling technology to increase the spectral efficiency (SE) of a wireless link over its half-duplex (HD) counterparts by a factor close to two, since it allows downlink (DL) and uplink (UL) transmissions on the same time-frequency resources \cite{Yadav:IEEEVTM:June2018,Sharma:IEEECST:2018, Sabharwal:JSAC:Feb2014,GoyalCMag15}. Although the main barrier in implementing FD  is the self-interference (SI), many recent advances in  active and passive SI suppression techniques have been successful to  bring the SI power at the background noise level \cite{Bharadia14}. As a result,  FD-enabled base station (BS) systems  have been widely studied in small-cell (SC) cellular networks \cite{Yadav:Access,Dinh:Access,Hieu:IEEETWC:June2019,Hieu:IEEETCOM:June2019, Dinh:JSAC:18}.

Recently,  cell-free massive multiple-input multiple-output (CF-mMIMO), where a very large number of access points (APs) is distributed over a wide  area to coherently serve numerous user equipments (UEs) in the same resources, has been proposed to overcome the inter-cell interference \cite{Ngo:TWC:Mar2017,Bashar:IEEETWC:Apr2019}. CF-mMIMO not only inherits the properties of favorable propagation and channel hardening from collocated massive MIMO networks (Co-mMIMO), but also reduces path losses due to the APs placed closer to UEs. It can be seen that the performance gains of CF-mMIMO are obtained by the joint process at a central processing unit (CPU). 

Despite their potentials, there are only a few attempts on characterizing the performance of an FD-enabled CF-mMIMO. In this regard, the authors in \cite{Vu:ICC:May2019} analyzed the performance of FD CF-mMIMO, where all APs operate in the FD mode with the usage of a conjugate beamforming/matched filtering transmission design. Tackling the imperfect channel state information (CSI) and spatial correlation for the FD CF-mMIMO system was studied in \cite{Wang:arxiv:2019}, with a genetic algorithm-based user scheduling strategy applied to alleviate the co-channel interference (CCI). However, none of the aforementioned works properly examined the training procedure for a practical purpose in FD CF-mMIMO, while the existing approaches designed for HD operation may not be inapplicable to FD operation under the strong effects of SI, inter-AP interference (IAI) and CCI. Moreover, it is crucial to develop a low-complexity robust design to attain the optimal SE performance of FD CF-mMIMO systems under channel uncertainties.

In the above context, this paper considers an FD CF-mMIMO system under time-division duplex (TDD) operation, where  FD-enabled multiple-antenna APs  simultaneously serve UL and DL UEs on the same time-frequency resources. A novel heap-based pilot assignment strategy for FD CF-mMIMO is proposed not only to enhance the quality of channel estimates but also to reduce the computational complexity required for the training process. {\hilidraf Furthermore, we develop an efficient transmission design for FD CF-mMIMO, in which the power control and AP-DL UE association are jointly optimized to reduce network interference.} Based on the widely-used zero-forcing (ZF) method, we propose a new robust design and an iterative algorithm whose complexity is less impacted by the large number of APs (or antennas). The main contributions of this paper are summarized as follows:
\begin{itemize}
   \item We first propose a novel heap-based pilot assignment algorithm to reduce both pilot contamination and training complexity.
	\item Aiming at SE optimization, we introduce new binary variables to establish the AP-DL UE associations. In our system design, the APs can be  automatically switched between the FD and HD operations, which allows to exploit the full potential of FD CF-mMIMO. 
	\item We further consider a robust design problem for SE maximization with joint power control and AP-DL UE association, which is formulated as a difficult class of mixed-integer nonconvex optimization problem. To efficiently solve the problem, we first exploit the structure of the optimal solution of binary variables to transform the original problem into a more tractable form. Based on the ZF design, a very simple, yet efficient algorithm based on the inner approximation (IA) method is devised to obtain a locally optimal solution of the problem.
	\item Numerical results confirm  that the proposed algorithms greatly reduce the pilot contamination and  improve the SE performance over the current state-of-the-art approaches under both HD and FD operation modes.
\end{itemize}


\emph{Notation}: 
$\tr(\cdot)$, $\|\cdot\|$ and $|\cdot|$ are the trace,  Euclidean norm and  absolute value, respectively.  $\mathtt{diag}(\mathbf{a})$ returns  the diagonal matrix with the main diagonal assembled from elements of $\mathbf{a}$.   

\section{System Model} \label{SystemModelandProblemFormulation}
\subsection{Transmission Model}
\begin{figure}[t]
	\centering
	\includegraphics[width=0.76\columnwidth,trim={0cm 0.0cm 0cm 0.0cm}]{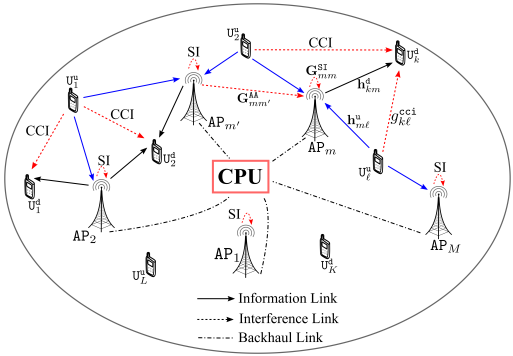}
		\vspace{-5pt}
	\caption{An illustration of the full-duplex cell-free massive MIMO system.}
	\label{fig: system model}
\end{figure}
An FD CF-mMIMO system operated in TDD mode is considered, where the set $\mathcal{M}\triangleq\{1,2,\cdots,M\}$ of $M$ FD-enabled APs simultaneously serves the sets  $\mathcal{K}\triangleq\{1,2,\cdots,K\}$ of $K$  DL UEs and $\mathcal{L}\triangleq\{1,2,\cdots,L\}$ of $L$  UL UEs in the same time-frequency resources, as illustrated in Fig. \ref{fig: system model}. The total number of  APs' antennas is $N=\sum_{m\in\mathcal{M}}N_m $, where $N_m$ is the number of antennas at AP $m$, while each UE has a single-antenna. All APs are equipped with  FD capability by circulator-based FD radio prototypes \cite{Bharadia14}, which are connected to the CPU through   perfect backhaul links with sufficiently large capacities \cite{Ngo:TWC:Mar2017}. At the CPU, the message sent by an UL UE is decoded by aggregating  the received signals from all active APs due to the UL broadcast transmission. Meanwhile, each DL UE should be served by a subset of active APs with good channel conditions \cite{Ngo:TWC:Mar2017}. This is done by introducing new binary variables to establish the AP-DL UE associations.


For notational convenience, let us denote the $m$-th AP,  $ k $-th DL UE and $\ell $-th UL UE  by $\AP$, $\DLU$ and $\ULU$, respectively. The channel vectors and matrices from  $ \AP  \rightarrow\DLU$, $\ULU\rightarrow \AP $, $\ULU\rightarrow\DLU$ and $ \APx{m'}\rightarrow\APx{m},\forall m'\in\cM $ are denoted by $ \mathbf{h}_{km}^\dl\in \mathbb{C}^{1\times N_m} $, $ \mathbf{h}_{m\ell}^\ul \in \mathbb{C}^{N_m\times1}$, $ g_{k\ell}^{\mathtt{cci}}\in \mathbb{C} $ and $ \mathbf{G}_{mm'}^{\AtoA} \in \mathbb{C}^{N_m\times N_{m'}} $, respectively. Note that $ \mathbf{G}_{mm}^{\AtoA}$ is the SI channel at $\AP$, while $ \mathbf{G}_{mm'}^{\AtoA}, \forall m\neq m'$ is referred to as the inter-AP interference (IAI) channel since   UL signals received at $ \APx{m} $ are corrupted by  DL signals sent from $ \APx{m'} $. To differentiate the residual SI and IAI channels, we model $ \mathbf{G}_{mm'}^{\AtoA}$ as follows:
\begin{numcases}{ \mathbf{G}_{mm'}^{\AtoA}=}
\sqrt{\rho_{mm}^{\mathtt{RSI}}}\mathbf{G}_{mm}^{\mathtt{SI}}, & if $m=m'$, \nonumber\\
\mathbf{G}_{mm'}^{\AtoA}, & otherwise, \nonumber
\end{numcases}
where $\mathbf{G}_{mm}^{\mathtt{SI}}$ denotes the fading loop channel at $\AP$ which interferes the UL reception due to the concurrent DL transmission, and $\rho_{mm}^{\mathtt{RSI}}\in[0,1)$ is the residual SI suppression (SiS) level after all real-time cancellations in analog-digital domains \cite{Sabharwal:JSAC:Feb2014,Dinh:JSAC:18,Hieu:IEEETWC:June2019}. The fading loop channel $\mathbf{G}_{mm}^{\mathtt{SI}}$ can be  characterized as the Rician probability distribution
with a small Rician factor \cite{Duarte:TWC:12}, while other channels are generally modeled as $ \mathbf{h}=\sqrt{\beta}\mathbf{\ddot{h}} $ with $ \mathbf{h}\in\{\mathbf{G}_{mm'}^{\AtoA}, \mathbf{h}_{km}^{\dl}, \mathbf{h}_{m\ell}^{\ul}, g_{k\ell}^{\mathtt{cci}} \} $  accounting for the effects of large-scale fading $\beta$ (i.e., path loss and shadowing) and small-scale fading $\mathbf{\ddot{h}} $ whose elements are independent and identically distributed (i.i.d.) $\mathcal{CN}(0,1)$ random variables (RVs). Since the transmit and receive antennas as well as the APs are generally fixed in a given area without mobility, the IAI and SI channels $ \mathbf{G}_{mm'}^{\AtoA},\forall m,\;m'\in\mathcal{M} $, are assumed to be perfectly acquired at the CPU. Therefore, we focus on the channel estimation of DL, UL and CCI in the rest of paper.


\subsection{Pilot Assignment Problem for Uplink Training}
We assume that all UEs share the same orthogonal set of pilots, and the DL and UL UEs send  training sequences in different intervals to allow the channel estimation of CCI links. Let $ \tau<\min\{K,L\} $ be the length of pilot sequences. Then, the pilot set is defined as $ \boldsymbol{\Xi}\triangleq[ \boldsymbol{\varphi}_{1},\;\cdots\;,\boldsymbol{\varphi}_{\tau} ]\in\mathbb{C}^{\tau \times\tau} $, where $ \boldsymbol{\varphi}_i\in\mathbb{C}^{\tau\times 1}$ satisfies the orthogonality, i.e., $\boldsymbol{\varphi}_{i}^H\boldsymbol{\varphi}_{i'}=1$ if $i= i' \in \mathcal{T}_{\mathtt{p}}\triangleq\{1,\cdots,\tau \} $, and $\boldsymbol{\varphi}_{i}^H\boldsymbol{\varphi}_{i'}=0$, otherwise.
We introduce the assignment variable $ \upsilon_{ij}\in\{0,1\} $ to determine whether the $ i $-th pilot sequence is assigned to the $ j $-th UE, with $ j\in\mathcal{T}_{\ul}\triangleq\{1,\cdots,U\} $ and  $ U\in\{K,L\} $. As a result, the pilot assigned to  UE $j$ can be expressed as	$ \boldsymbol{\bar{\varphi}}_j = \boldsymbol{\varphi}_i $ if $ \upsilon_{ij}=1 $. Let $ \boldsymbol{\bar{\Xi}}\triangleq[ \boldsymbol{\bar{\varphi}}_{1},\cdots,\boldsymbol{\bar{\varphi}}_{U} ]\in\mathbb{C}^{\tau \times U} $ be the pilot assignment matrix, such as $\boldsymbol{\bar{\Xi}} = \boldsymbol{\Xi}\boldsymbol{\Upsilon},$
where $ \boldsymbol{\Upsilon}\triangleq[\upsilon_{ij}]_{i\in\mathcal{T}_{\mathtt{p}},j\in\mathcal{T}_{\ul} }\in\mathbb{C}^{\tau\times U} $ following by the condition:
$\sum_{i\in\mathcal{T}_{\mathtt{p}} } \upsilon_{ij} \leq 1,\;\forall j\in\mathcal{T}_{\ul}.$

The training procedure for FD CF-mMIMO in TDD operation is executed in two phases. In the first phase, UL UEs send their pilot signals to APs to perform channel estimation, and at the same time, DL UEs also receive UL pilots to estimate CCI channels. In the second phase, DL UEs send their pilot signals along with the estimates of CCI links to APs. The training signals received at $ \AP $ can be written as $\mathbf{Y}_{m}^{\mathtt{tr}} = \sum_{j'\in\mathcal{T}_{\ul}} \sqrt{\tau p_{j'}^{\mathtt{tr}}}\boldsymbol{\bar{\varphi}}_{j'} \mathbf{h}_{mj'} + \mathbf{Z}_m,$
where $ \mathbf{h}_{mj}\in\{\mathbf{h}_{km}^{\dl}, (\mathbf{h}_{m\ell}^{\ul})^H \}\in\mathbb{C}^{1\times N_m} $, and $ p_{j}^{\mathtt{tr}} $ and $ \mathbf{Z}_{m}\sim \mathcal{CN}(0,\sigma^2_{\mathtt{AP}}\mathbf{I}) $ denote the UL training power of UE $j$ and the AWGN, respectively.  Using the linear minimum mean square error (LMMSE) estimation, the channel estimate of $ \mathbf{h}_{mj} $ is given as
\begin{align} \label{eq: channel estimate}
\mathbf{\hat{h}}_{mj} & = \frac{\sqrt{\tau p_{j}^{\mathtt{tr}} }\beta_{mj}}{\sum_{j'\in\mathcal{T}_{\ul}} \tau p_{j'}^{\mathtt{tr}} \beta_{mj'} |\boldsymbol{\bar{\varphi}}_{j}^H\boldsymbol{\bar{\varphi}}_{j'}|^2 + \sigma^2_{\mathtt{AP}}} \boldsymbol{\bar{\varphi}}_{j}^H\mathbf{Y}_{m}^{\mathtt{tr}},
\end{align}
where $\beta_{mj}\in\{\beta_{km}^{\dl},\beta_{m\ell}^{\ul}\} $ is  the large-scale fading of the link between $ \AP $ and UE $j$.
We denote $\mathbf{\tilde{h}}_{mj}=\mathbf{h}_{mj}-\mathbf{\hat{h}}_{mj}$ as the channel estimation error, which is independent of $\mathbf{h}_{mj}$. The
elements of $\mathbf{\tilde{h}}_{mj}$ can be modeled as i.i.d. $\mathcal{CN}(0,\varepsilon_{mj})$ RVs, where $ \varepsilon_{mj}\in\{\varepsilon_{km}^{\dl},\varepsilon_{m\ell}^{\ul}\} $ corresponds to $ \beta_{mj} $, and 
\begin{IEEEeqnarray}{cl} \label{eq: MSE}
	\varepsilon_{mj}  =   \beta_{mj}  \Bigl(1 - \frac{\tau p_{j}^{\mathtt{tr}} \beta_{mj}}{\sum_{j'\in\mathcal{T}_{\ul}} \tau p_{j'}^{\mathtt{tr}} \beta_{mj'} |\boldsymbol{\bar{\varphi}}_{j}^H\boldsymbol{\bar{\varphi}}_{j'}|^2 + \sigma^2_{\mathtt{AP}}} \Bigr).\quad\
\end{IEEEeqnarray}
In an analogous fashion,  the channel estimate and  channel estimation error of CCI link $g_{k\ell}^{\mathtt{cci}}$ executed at $ \DLU$ are given as
\begin{align} \label{eq: channel estimate CCI}
\hat{g}_{k\ell}^{\mathtt{cci}}  = \frac{\sqrt{\tau p_{\ell}^{\mathtt{tr}} }\beta_{k\ell}^{\mathtt{cci}}}{\sum_{\ell'\in\mathcal{L}} \tau p_{\ell'}^{\mathtt{tr}} \beta_{k\ell'}^{\mathtt{cci}} |\boldsymbol{\bar{\varphi}}_{\ell}^H\boldsymbol{\bar{\varphi}}_{\ell'}|^2 + \sigma_{k}^2} \boldsymbol{\bar{\varphi}}_{\ell}^H\mathbf{y}_{k}^{\mathtt{tr},\mathtt{cci}}, 
\end{align}
and $\tilde{g}_{k\ell}^{\mathtt{cci}}\sim\mathcal{CN}(0,\varepsilon_{k\ell}^{\mathtt{cci}})$, respectively, where	
\begin{align}
\varepsilon_{k\ell}^{\mathtt{cci}}  = \beta_{k\ell}^{\mathtt{cci}}\Bigl(1-\frac{\tau p_{\ell}^{\mathtt{tr}} \beta_{k\ell}^{\mathtt{cci}}}{\sum_{\ell'\in\mathcal{L}} \tau p_{\ell'}^{\mathtt{tr}} \beta_{k\ell'}^{\mathtt{cci}} |\boldsymbol{\bar{\varphi}}_{\ell}^H\boldsymbol{\bar{\varphi}}_{\ell'}|^2 + \sigma_{k}^2}\Bigr).
\end{align}
Here, $\beta_{k\ell}^{\mathtt{cci}}$ denotes the large-scale fading of  CCI link  $\ULU \rightarrow\DLU$,  and   $ \mathbf{y}_{k}^{\mathtt{tr},\mathtt{cci}}=\sum_{\ell\in\mathcal{T}_{\ul}} \sqrt{\tau p_{j'}^{\mathtt{tr}}}\boldsymbol{\bar{\varphi}}_{\ell} g_{k\ell}^{\mathtt{cci}} + \mathbf{z}_{k} $, with $ \mathbf{z}_{k}\sim\mathcal{CN}(0,\sigma_{k}^2\mathbf{I}) $, is the UL UEs' training signals  received at $ \DLU $.

To mitigate the effects of pilot contamination, a  pilot assignment for the main DL and UL channels  is far more important that of CCI channels. Thus, we consider the following pilot assignment problem:
\begingroup\allowdisplaybreaks\begin{subequations} \label{eq: prob. MSE}
	\begin{IEEEeqnarray}{cl}
		\underset{\boldsymbol{\Upsilon}}{\min} &\quad  \underset{j\in\mathcal{T}_{\ul}}{\max} \sum\nolimits_{m\in\mathcal{M}} \frac{N_{m} \varepsilon_{mj}}{\beta_{mj}} \label{eq: prob. MSE :: a} \\
		\st & \quad \upsilon_{ij}\in\{0,1\},\;\sum\nolimits_{i\in\mathcal{T}_{\mathtt{p}} } \upsilon_{ij} \leq 1,\;\forall i\in\mathcal{T}_{\mathtt{p}},\forall j\in\mathcal{T}_{\ul}.\label{eq: prob. MSE :: b} \qquad
	\end{IEEEeqnarray}							
\end{subequations}\endgroup

\section{Proposed Heap-Based Pilot Assignment Strategy}
Problem \eqref{eq: prob. MSE} is a min-max problem for  sum of fractional functions, for which it is hard to find an optimal solution.  For an efficient solution, we first introduce the following theorem.
\begin{theorem} \label{thm: MSE prob.}
	Problem  \eqref{eq: prob. MSE} can be solved via the following tractable problem:
	\begin{IEEEeqnarray}{cl}\label{eq: prob. MSE quad.}
		\underset{\boldsymbol{\Upsilon}}{\min} \quad  \underset{j\in\mathcal{T}_{\ul}}{\max}  \sum\nolimits_{j'\in\mathcal{T}_{\ul}}  \tilde{\beta}_{j'}\boldsymbol{\upsilon}_{j}^H\boldsymbol{\upsilon}_{j'},\ \st\quad \eqref{eq: prob. MSE :: b},\quad
	\end{IEEEeqnarray}							
	where $ \tilde{\beta}_{j'}\triangleq\sum_{m\in\mathcal{M}}N_{m}\tau p_{j'}^{\mathtt{tr}} \beta_{mj'} $. 
\end{theorem}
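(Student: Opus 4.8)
\emph{Proof idea.} The plan is to substitute the LMMSE error \eqref{eq: MSE} into the objective \eqref{eq: prob. MSE :: a}, expose its dependence on the assignment matrix $\boldsymbol{\Upsilon}$ solely through the pilot-collision pattern, and then recognize the resulting expression as an affine function of the $N_m$-weighted aggregate pilot interference that defines \eqref{eq: prob. MSE quad.}.

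First I would use the orthonormality of the pilot codebook, $\boldsymbol{\Xi}^H\boldsymbol{\Xi}=\mathbf{I}_\tau$. Since $\boldsymbol{\bar{\varphi}}_j=\boldsymbol{\Xi}\boldsymbol{\upsilon}_j$ with $\boldsymbol{\upsilon}_j$ the $j$-th column of $\boldsymbol{\Upsilon}$, one gets $|\boldsymbol{\bar{\varphi}}_j^H\boldsymbol{\bar{\varphi}}_{j'}|^2=\boldsymbol{\upsilon}_j^H\boldsymbol{\Xi}^H\boldsymbol{\Xi}\boldsymbol{\upsilon}_{j'}=\boldsymbol{\upsilon}_j^H\boldsymbol{\upsilon}_{j'}\in\{0,1\}$, i.e.\ this factor is exactly the indicator that UEs $j$ and $j'$ are served a common pilot (and it equals $1$ when $j'=j$, since at an optimum every UE is assigned a pilot). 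Plugging \eqref{eq: MSE} into \eqref{eq: prob. MSE :: a}, and writing the LMMSE denominator as $D_{mj}(\boldsymbol{\Upsilon})\triangleq\sum_{j'\in\mathcal{T}_{\ul}}\tau p_{j'}^{\mathtt{tr}}\beta_{mj'}\boldsymbol{\upsilon}_j^H\boldsymbol{\upsilon}_{j'}+\sigma^2_{\mathtt{AP}}$, each inner term collapses to
\begin{align*}
\sum_{m\in\mathcal{M}}\frac{N_m\varepsilon_{mj}}{\beta_{mj}}
=\sum_{m\in\mathcal{M}}N_m\left(1-\frac{\tau p_j^{\mathtt{tr}}\beta_{mj}}{D_{mj}(\boldsymbol{\Upsilon})}\right)
=N-\sum_{m\in\mathcal{M}}\frac{N_m\tau p_j^{\mathtt{tr}}\beta_{mj}}{D_{mj}(\boldsymbol{\Upsilon})},
\end{align*}
with $N=\sum_{m\in\mathcal{M}}N_m$ a constant. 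Hence \eqref{eq: prob. MSE} is equivalent to $\max_{\boldsymbol{\Upsilon}}\min_{j\in\mathcal{T}_{\ul}}\sum_{m\in\mathcal{M}}N_m\tau p_j^{\mathtt{tr}}\beta_{mj}/D_{mj}(\boldsymbol{\Upsilon})$ subject to \eqref{eq: prob. MSE :: b}: the assignment enters only through $D_{mj}$, and each $D_{mj}$ grows monotonically with the co-pilot interference $\sum_{j'\neq j}\tau p_{j'}^{\mathtt{tr}}\beta_{mj'}\boldsymbol{\upsilon}_j^H\boldsymbol{\upsilon}_{j'}$.

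Next I would form the $N_m$-weighted aggregate of the denominators over all AP antennas: $\sum_{m\in\mathcal{M}}N_mD_{mj}(\boldsymbol{\Upsilon})=\sum_{j'\in\mathcal{T}_{\ul}}\big(\sum_{m\in\mathcal{M}}N_m\tau p_{j'}^{\mathtt{tr}}\beta_{mj'}\big)\boldsymbol{\upsilon}_j^H\boldsymbol{\upsilon}_{j'}+N\sigma^2_{\mathtt{AP}}=\sum_{j'\in\mathcal{T}_{\ul}}\tilde{\beta}_{j'}\boldsymbol{\upsilon}_j^H\boldsymbol{\upsilon}_{j'}+N\sigma^2_{\mathtt{AP}}$, which is exactly the objective of \eqref{eq: prob. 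MSE quad.} up to the additive constant $N\sigma^2_{\mathtt{AP}}$. Because $\varepsilon_{mj}/\beta_{mj}$ is increasing in $D_{mj}$, minimizing the worst-UE normalized MSE amounts to minimizing, for the worst UE, the pilot contamination seen at the APs; aggregating across antennas and discarding the constant yields $\min_{\boldsymbol{\Upsilon}}\max_{j}\sum_{j'}\tilde{\beta}_{j'}\boldsymbol{\upsilon}_j^H\boldsymbol{\upsilon}_{j'}$. Equivalently, since each $\boldsymbol{\upsilon}_j^H\boldsymbol{\upsilon}_{j'}$ is a $0/1$ collision flag, both problems partition the $U$ UEs into $\tau$ pilot groups, and \eqref{eq: prob. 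MSE quad.} simply picks the partition that minimizes the largest group ``weight'' $\sum_{j'\in\,\mathrm{group}}\tilde{\beta}_{j'}$ --- the load-balancing objective the proposed heap-based assignment is designed to (greedily) solve.

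The hard part is the passage from ``minimize the worst-UE weighted aggregate $\sum_m N_mD_{mj}$'' to ``minimize the worst-UE weighted sum of per-antenna ratios $\sum_m N_m\tau p_j^{\mathtt{tr}}\beta_{mj}/D_{mj}$'': the map $(D_{mj})_m\mapsto\sum_m N_m\tau p_j^{\mathtt{tr}}\beta_{mj}/D_{mj}$ is not a function of the scalar $\sum_m N_mD_{mj}$ alone, so the two criteria are not literally identical. I would bridge this either (i) in the interference-limited regime, where $\sigma^2_{\mathtt{AP}}$ and a UE's own received pilot power are negligible against the co-pilot interference, so the per-antenna ratio reduces to UE $j$'s fractional share of its pilot group and the two criteria coincide; or (ii) through the elementary inequality $\sum_m x_m/y_m\ge(\sum_m x_m)/(\sum_m y_m)$ applied with $x_m=N_m\tau p_j^{\mathtt{tr}}\beta_{mj}$ and $y_m=D_{mj}$, which turns \eqref{eq: prob. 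MSE quad.} into a genuine surrogate whose minimizer also controls the worst-case normalized MSE in \eqref{eq: prob. MSE}. In every case the structural fact doing the work is that the MSE depends on $\boldsymbol{\Upsilon}$ only through the monotone collision pattern $\{\boldsymbol{\upsilon}_j^H\boldsymbol{\upsilon}_{j'}\}$, so a pilot partition that is good for \eqref{eq: prob. MSE quad.} is good for \eqref{eq: prob. MSE}.
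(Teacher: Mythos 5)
Your algebraic reduction is the right one, and for the mechanical part of the argument it is surely the intended one (the paper defers the proof to an external appendix, but there is only one route): using $|\boldsymbol{\bar{\varphi}}_j^H\boldsymbol{\bar{\varphi}}_{j'}|^2=\boldsymbol{\upsilon}_j^H\boldsymbol{\upsilon}_{j'}\in\{0,1\}$, rewriting the objective of \eqref{eq: prob. MSE} as $N-\sum_{m\in\mathcal{M}} N_m\tau p_j^{\mathtt{tr}}\beta_{mj}/D_{mj}(\boldsymbol{\Upsilon})$, and observing that $\sum_{m\in\mathcal{M}} N_m D_{mj}=\sum_{j'\in\mathcal{T}_{\ul}}\tilde{\beta}_{j'}\boldsymbol{\upsilon}_j^H\boldsymbol{\upsilon}_{j'}+N\sigma^2_{\mathtt{AP}}$ is exactly how the surrogate objective in \eqref{eq: prob. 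MSE quad.} arises. You also correctly isolate the one genuinely lossy step: for each UE $j$ the objective of \eqref{eq: prob. MSE} is a sum of increasing functions of the individual $D_{mj}$, not a function of the scalar $\sum_m N_m D_{mj}$, so \eqref{eq: prob. MSE quad.} is a surrogate rather than an exact equivalent --- which is all that the theorem's ``can be solved via'' can mean.

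That said, neither of your two bridges closes the gap as stated. Bridge (ii) yields $S_j\triangleq\sum_m N_m\tau p_j^{\mathtt{tr}}\beta_{mj}/D_{mj}\ge\tilde{\beta}_j/\bigl(\sum_{j'}\tilde{\beta}_{j'}\boldsymbol{\upsilon}_j^H\boldsymbol{\upsilon}_{j'}+N\sigma^2_{\mathtt{AP}}\bigr)$, but maximizing $\min_j$ of this bound is still not problem \eqref{eq: prob. MSE quad.}: the numerators $\tilde{\beta}_j$ differ across $j$, so the UE attaining the minimum ratio need not be the UE attaining the maximum denominator (a UE with small $\tilde{\beta}_j$ can dominate the min even on a clean pilot). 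You need one more step --- e.g., a normalization, or the observation that $\boldsymbol{\Upsilon}$ enters each ratio only through its denominator so that min--maxing the denominators is the natural conservative policy --- to land on \eqref{eq: prob. MSE quad.}. Bridge (i) has the same defect and is additionally inconsistent with the self-term $\tilde{\beta}_j\boldsymbol{\upsilon}_j^H\boldsymbol{\upsilon}_j$, which \eqref{eq: prob. MSE quad.} retains but a purely interference-limited reading would discard. Finally, your parenthetical claim that every UE receives a pilot at an optimum should be promoted to an explicit step: under \eqref{eq: prob. MSE :: b} written with ``$\le 1$'', the literal global minimizer of \eqref{eq: prob. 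MSE quad.} is $\boldsymbol{\Upsilon}=\mathbf{0}$ (objective zero), which is useless for \eqref{eq: prob. MSE}; it is only by forcing $\boldsymbol{\upsilon}_j^H\boldsymbol{\upsilon}_j=1$ for all $j$ (as the heap algorithm does) that the surrogate becomes meaningful.
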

\begin{proof}
	Please see \cite[Appendix F]{nguyen2019spectral}.
\end{proof}
We now propose the heap structure-based pilot assignment strategy. To do this,  the following definition is invoked.
\begin{definition}
	Min heap $ (\mathcal{H}^{\mathtt{min}}) $ is a tree-based structure, where  $ \mathcal{H}_{\mathtt{p}}^{\mathtt{min}} $ is a parent node of an arbitrary node $ \mathcal{H}_{\mathtt{c}}^{\mathtt{min}} $. Then, the key of $ \mathcal{H}_{\mathtt{p}}^{\mathtt{min}} $ is less than or equal to that of $ \mathcal{H}_{\mathtt{c}}^{\mathtt{min}} $. In a max heap $ (\mathcal{H}^{\mathtt{max}}) $, the key of $ \mathcal{H}_{\mathtt{p}}^{\mathtt{max}} $ is greater than or equal to that of $ \mathcal{H}_{\mathtt{c}}^{\mathtt{max}} $ \cite{Cormen:2001:IA:500824}.
\end{definition}

Let $ \mathcal{H}\in\{ \mathcal{H}^{\mathtt{min}}, \mathcal{H}^{\mathtt{max}} \} $, the main operations of heap structure include: $\bigl( \mathcal{G}(\mathbf{x},\{\mathbf{y}\}) \rightarrowtail \mathcal{H} \bigl)$ to \textit{generate a heap}, $\bigl( \mathcal{H} \rightarrow (x,\{\mathbf{y}\}) \bigl)$ to \textit{find min/max value}, $\bigl( \mathcal{H} \vdash (x,\{\mathbf{y}\}) \bigl)$ to \textit{extract the root node}, and $\bigl( \mathcal{H} \dashv (x,\{\mathbf{y}\}\bigr) $ to \textit{replace and sift-down} \cite{nguyen2019spectral}.
The proposed algorithm for pilot assignment is summarized in Algorithm~\ref{alg: MSE problem}. It takes the complexity of $ \mathcal{O}\bigr(U\log_{2}(\tau U )\bigl) $ for deriving the assignment solution, which has relatively low complexity.  For simplicity, this training strategy is referred to as Heap-FD, in which Algorithm~\ref{alg: MSE problem} is operated twice, i.e., with $ U=L $ for UL channel estimation in the first phase, and with $ U=K $ for achieving DL  and CCI channel estimates in the second phase. On the other hand, the training strategy for HD systems can be done by setting $ U=K+L $, called Heap-HD. 


\begin{algorithm}[t]
	\begin{algorithmic}[1]
				\fontsize{9}{10}\selectfont
		\protect\caption{Proposed Heap-Based Pilot Assignment for MSE Minimization Problem \eqref{eq: prob. MSE quad.}}
		
		\label{alg: MSE problem}
		
		\STATE Compute $ \boldsymbol{\tilde{\beta}}\triangleq[\tilde{\beta}_{j}]_{j\in\mathcal{T}_{\ul}} $ as in \textbf{Theorem}~\ref{thm: MSE prob.}.
		
		\STATE Randomly assign $ \tau $ pilots to the $ \tau $ first UEs in $ \mathcal{T}_{\ul} $, yielding $ \boldsymbol{\upsilon}_{j},\;\forall j=1,\cdots,\tau $.
		
		\STATE Execute $ \mathcal{G}([\boldsymbol{\tilde{\beta}}]_{1:\tau}, \{\boldsymbol{\upsilon}_j \}_{j=1,\cdots,\tau})\rightarrowtail\mathcal{H}^{\mathtt{min}} $.
		
		\STATE Execute $ \mathcal{G}([\boldsymbol{\tilde{\beta}}]_{\tau+1:U},\{\tau+1,\cdots,U\} )\rightarrowtail\mathcal{H}^{\mathtt{max}} $.
		
		\WHILE {$ \mathcal{H}^{\mathtt{max}}\neq \emptyset $}
		\STATE $ \mathcal{H}^{\mathtt{max}} \vdash (\tilde{\beta}_{j'},\{j'\}) $. \COMMENT{\textit{Root node is removed from} $ \mathcal{H}^{\mathtt{max}} $}
		\STATE $ \mathcal{H}^{\mathtt{min}} \rightarrow (\bar{\beta}_{i},\{\mathbf{\boldsymbol{\upsilon}}_{i}\}) $. 
		\STATE $ \mathbf{\boldsymbol{\upsilon}}_{j'}:=\mathbf{\boldsymbol{\upsilon}}_{i} $.
		\STATE $ \mathcal{H}^{\mathtt{min}} \dashv (\bar{\beta}_{i}+\tilde{\beta}_{j'},\{\mathbf{\boldsymbol{\upsilon}}_{i}\}) $.
		\ENDWHILE 
		
		\STATE Concatenate  assignment variable vectors as $ \boldsymbol{\Upsilon}:=[\boldsymbol{\upsilon}_{1},\cdots,\boldsymbol{\upsilon}_{U}] $. 
		
		
		\STATE {\textbf{Output:}  Pilot assignment matrix $\boldsymbol{\bar{\Xi}} = \boldsymbol{\Xi}\boldsymbol{\Upsilon}$}.
	\end{algorithmic} 
\end{algorithm}

\section{Optimization Problem for Robust Design} \label{OptimizationProblemDesign}
Let us denote by $ x_{k}^\dl $ and $ x_{\ell}^\ul$ the data symbols with unit power (i.e., $ \mathbb{E}\bigl[|x_{k}^{\dl}|^2\bigr]=1$ and $\mathbb{E}\bigl[|x_{\ell}^{\ul}|^2\bigr]=1$) intended for $\DLU$ and sent from $\ULU$, respectively. The beamforming vector $ \mathbf{w}_{km}\in \mathbb{C}^{N_m \times 1} $ is employed to precode the data symbol $ x_{k}^\dl $ of $\DLU$ in the DL, while $ p_{\ell}$ denotes the transmit power  of $\ULU$ in the UL. Let us introduce the new binary variables $ \alpha_{km}\in\{0,1\}, \forall k\in\mathcal{K}, m\in\mathcal{M}$ to  represent the association relationship between $\AP$ and $\DLU$, i.e.,
 $ \alpha_{km}=1 $ implying that $\DLU$ is served by $ \AP $ and $ \alpha_{km}=0 $, otherwise.  Using these notations, the  signal received at  $\DLU$ can be expressed as
\begingroup\allowdisplaybreaks\begin{align} \label{eq: received signal at DLU}
	y_{k}^\dl = & \sum_{m\in\mathcal{M}}\alpha_{km}\mathbf{h}_{km}^\dl\mathbf{w}_{km}x_{k}^\dl + \sum_{\ell\in\mathcal{L}} \sqrt{p_{\ell}} g_{k\ell}^{\mathtt{cci}} x_{\ell}^\ul\nonumber\\
	&\qquad + \sum_{m\in\mathcal{M}}\sum_{k'\in\mathcal{K}\setminus\{k\}}\alpha_{k'm}\mathbf{h}_{km}^\dl\mathbf{w}_{k'm}x_{k'}^\dl  + n_{k}, 
\end{align}\endgroup
where $ n_k\sim \mathcal{CN}(0,\sigma_{k}^2) $ is the additive white Gaussian noise (AWGN) and $\sigma_{k}^2$ is the noise variance. We consider a worst-case robust design by treating  CSI errors as noise. As a result, the received SINR at  $ \DLUi{k} $ is given as
\begin{align} \label{eq: DL SINR - general}
\gamma_{k}^{\dl}(\mathbf{w}, \mathbf{p},\boldsymbol{\alpha}) = \frac{\sum_{m\in\mathcal{M}}\alpha_{km}|\mathbf{\hat{h}}_{km}^\dl\mathbf{w}_{km}|^2 }{\chi_k(\mathbf{w}, \mathbf{p},\boldsymbol{\alpha})},
\end{align}
where $\chi_k(\mathbf{w}, \mathbf{p},\boldsymbol{\alpha})\triangleq\sum_{m\in\mathcal{M}}\sum_{k'\in\mathcal{K}\setminus\{k\}}\alpha_{k'm}|\mathbf{h}_{km}^\dl\mathbf{w}_{k'm}|^2 + \sum_{m\in\mathcal{M}}\sum_{k'\in\mathcal{K}}\alpha_{k'm}\varepsilon_{k'm}^{\dl}\|\mathbf{w}_{k'm}\|^2+\sum_{\ell\in\mathcal{L}}p_{\ell}|\hat{g}_{k\ell}^{\mathtt{cci}}|^2+\sum_{\ell\in\mathcal{L}}p_{\ell}\varepsilon_{k\ell}^{\mathtt{cci}}+\sigma_{k}^2$, $\mathbf{w}\triangleq[\mathbf{w}_{1}^H,\cdots,\mathbf{w}_K^H]^H\in\mathbb{C}^{NK\times 1}$ with $\mathbf{w}_k\triangleq[\mathbf{w}_{k1}^H,\cdots,\mathbf{w}_{kM}^H]^H\in\mathbb{C}^{N\times 1}$, $\mathbf{p}\triangleq[p_{1},\cdots,p_L]^T\in\mathbb{R}^{L\times 1}$, and $\boldsymbol{\alpha}\triangleq\{\alpha_{km}\}_{\forall k\in\mathcal{K}, m\in\mathcal{M}}$. We note that in \eqref{eq: DL SINR - general}, $ \alpha_{km} $ is equal to $ \alpha_{km}^2 $ for any $\alpha_{km}\in\{0,1\}$.

\vspace{-1pt}
The received signal at $ \AP $ can be expressed as
\begingroup
\allowdisplaybreaks{\small\begin{IEEEeqnarray}{rCl} \label{eq: received signal at AP}
\mathbf{y}_m^{\ul} & = \sum_{\ell\in\cL} \sqrt{p_{\ell}}  \mathbf{h}_{m\ell}^{\ul} x_{\ell}^{\ul}  + \sum_{ \substack{m'\in\cM}}\sum_{k\in\cK} \alpha_{km'} \mathbf{G}_{mm'}^{\AtoA} \mathbf{w}_{km'} x_{k}^{\dl} + \mathbf{n}_m,\quad
\end{IEEEeqnarray}}\endgroup
 where $ \mathbf{n}_m\sim \mathcal{CN}(0,\sigma_{\mathtt{AP}}^2\mathbf{I}) $ is the AWGN.  To decode the $ \ULU $'s message, let us denote the receiver vector by $\mathbf{a}_{m\ell}\in\mathbb{C}^{1\times N_m}$, and thus, the received signal of $ \ULU $ at $ \APx{m} $ can be expressed as $r_{m\ell}^{\ul}=\mathbf{a}_{m\ell}\mathbf{y}_m^{\ul}$.
Consequently, the post-detection signal for decoding the $ \ULU $'s signal is $r_{\ell}^{\ul}=\sum_{m\in\mathcal{M}}r_{m\ell}^{\ul}$.  By defining $\mathbf{\hat{h}}_{\ell}^{\ul}\triangleq\bigl[(\mathbf{\hat{h}}_{1\ell}^{\ul})^H,\cdots,(\mathbf{\hat{h}}_{M\ell}^{\ul})^H\bigr]^H\in\mathbb{C}^{N\times 1}$, $ \mathbf{\bar{G}}_{m'}^{\AtoA}\triangleq\bigl[(\mathbf{G}_{1m'}^{\AtoA})^H,\cdots,(\mathbf{G}_{Mm'}^{\AtoA})^H\bigr]^H\in\mathbb{C}^{N\times N_{m'}}$,  $ \mathbf{a}_{\ell}=[\mathbf{a}_{1\ell},\cdots,\mathbf{a}_{M\ell}] \in\mathbb{C}^{1\times N}$ and  $\mathbf{n}\triangleq[\mathbf{n}_1^H,\cdots,\mathbf{n}_M^H]^H\in\mathbb{C}^{N\times 1}$, the SINR in decoding   $ \ULU $'s message is given as
\begin{IEEEeqnarray}{rCl} \label{eq: UL SINR - general}
	\gamma_{\ell}^{\ul}(\mathbf{w}, \mathbf{p},\boldsymbol{\alpha}) = \frac{ p_{\ell}|\mathbf{a}_{\ell}  \mathbf{\hat{h}}_{\ell}^{\ul}|^2 }{ \mathcal{I}^{\AtoA}_{\ell} },\quad
\end{IEEEeqnarray}
where $ \mathcal{I}^{\AtoA}_{\ell}\triangleq \sum_{\ell'\in\mathcal{L}\setminus\{\ell\}} p_{\ell'} |\mathbf{a}_{\ell}  \mathbf{\hat{h}}_{\ell'}^{\ul}|^2+ \sum_{\ell'\in\mathcal{L}} p_{\ell'} \varepsilon_{m\ell'}^{\ul} \|\mathbf{a}_{\ell}\|^2+\sum_{m'\in\mathcal{M}}\sum_{k\in\mathcal{K}} \alpha_{km'} |\mathbf{a}_{\ell} \mathbf{\bar{G}}_{m'}^{\AtoA}\mathbf{w}_{km'}|^2+ \sigma_{\mathtt{AP}}^2\|\mathbf{a}_{\ell}\|^2 $.
\subsection{General Problem Formulation}
From \eqref{eq: DL SINR - general} and \eqref{eq: UL SINR - general}, the SE is given as
$	F_{\mathtt{SE}}\bigl(\mathbf{w}, \mathbf{p},\boldsymbol{\alpha}\bigr)\triangleq \sum\nolimits_{k\in\mathcal{K}}R\bigl(\gamma_{k}^{\dl}(\mathbf{w}, \mathbf{p},\boldsymbol{\alpha})\bigr)
	+\sum\nolimits_{\ell\in\mathcal{L}}R\bigl(\gamma_{\ell}^{\ul}(\mathbf{w}, \mathbf{p},\boldsymbol{\alpha})\bigr),
$
where $R(x)\triangleq \ln(1+x)$. Then, a joint design of  power control and AP-DL UE association is formulated as
\begingroup
\allowdisplaybreaks\begin{subequations} \label{eq: prob. general form bi-obj. trade-off}
	\begin{IEEEeqnarray}{cl}
		\underset{\mathbf{w}, \mathbf{p},\boldsymbol{\alpha}}{\max} &\quad  F_{\mathtt{SE}}\bigl(\mathbf{w}, \mathbf{p},\boldsymbol{\alpha}\bigr)  \label{eq: prob. general form bi-obj. trade-off :: a} \\
		\st & \quad \alpha_{km} \in \{0,1\}, \; \forall k \in \mathcal{K},m \in \mathcal{M}, \label{eq: prob. general form bi-obj. trade-off :: g} \\
		& \quad \|\mathbf{w}_{km}\|^2 \leq \alpha_{km} P_{\mathtt{AP}_m}^{\max} , \; \forall k \in \mathcal{K},m \in \mathcal{M}, \label{eq: prob. general form bi-obj. trade-off :: h} \\
		&\quad \sum\nolimits_{k\in\cK}\alpha_{km}\|\mathbf{w}_{km}\|^2 \leq  P_{\mathtt{AP}_m}^{\max},\;\forall m\in\mathcal{M}, \label{eq: prob. general form bi-obj. trade-off :: b} \qquad\\
		& \quad 0 \leq p_{\ell} \leq P_{\ell}^{\max},\; \forall \ell \in \mathcal{L}, \label{eq: prob. general form bi-obj. trade-off :: c} \\
		& \quad 
		R\bigl(\gamma_{k}^{\dl}(\mathbf{w}, \mathbf{p},\boldsymbol{\alpha})\bigr) \geq \bar{R}_{k}^{\dl}, \; \forall  k \in \mathcal{K}, \label{eq: prob. general form bi-obj. trade-off :: d} \\
		&  \quad
		R\bigl(\gamma_{\ell}^{\ul}(\mathbf{w}, \mathbf{p},\boldsymbol{\alpha})\bigr) \geq \bar{R}_{\ell}^{\ul}, \; \forall \ell \in \mathcal{L}. \label{eq: prob. general form bi-obj. trade-off :: e}
	\end{IEEEeqnarray}							
\end{subequations}\endgroup
 Constraint \eqref{eq: prob. general form bi-obj. trade-off :: h} is used for the user selection, while constraints \eqref{eq: prob. general form bi-obj. trade-off :: b} and \eqref{eq: prob. general form bi-obj. trade-off :: c} imply that the  transmit powers at  $\AP$ and $\DLU$ are limited by their maximum power budgets $ P_{\mathtt{AP}_m}^{\max} $ and $ P_{\ell}^{\max} $, respectively.  Moreover, constraints \eqref{eq: prob. general form bi-obj. trade-off :: d} and \eqref{eq: prob. general form bi-obj. trade-off :: e} are used to ensure the predetermined rate requirements $\bar{R}_{k}^{\dl}$ and $\bar{R}_{k}^{\ul}$ for $\DLU$ and $\ULU$, respectively. We can see that problem \eqref{eq: prob. general form bi-obj. trade-off} is a mixed-integer nonconvex optimization problem.

\subsection{Tractable Problem Formulation  for \eqref{eq: prob. general form bi-obj. trade-off}}
For solving  problem \eqref{eq: prob. general form bi-obj. trade-off}, it is not practical to try all possible AP-DL UE associations, especially for networks of large size. To overcome this issue, we exploit the special relationship between  continuous   and  binary variables \cite[\textbf{Lemma} 1 and \textbf{Theorem} 1]{nguyen2019spectral}. Particularly, we define $\boldsymbol{\Gamma}_{\dl}\triangleq\bigl\{\gamma_{k}^{\dl}(\mathcal{C},\mathbf{1})|\forall k\in\mathcal{K}\bigr\}$ and $\boldsymbol{\Gamma}_{\ul}\triangleq\bigl\{\gamma_{\ell}^{\ul}(\mathcal{C},\mathbf{1})|\forall \ell\in\mathcal{L}\bigr\}$ with all entries of $ \boldsymbol{\alpha} $ being replaced by ones. In short, problem \eqref{eq: prob. general form bi-obj. trade-off} can be rewritten as
\begingroup
\allowdisplaybreaks
\begin{subequations} \label{eq: prob. bi-obj. - no alpha}
	\begin{IEEEeqnarray}{cl}
		\underset{\mathcal{C}\triangleq\{\mathbf{w}, \mathbf{p}\}}{\max} &\quad \bar{F}_{\mathtt{SE}}\bigl(\boldsymbol{\Gamma}_{\dl},\boldsymbol{\Gamma}_{\ul}\bigr)  \label{eq: prob. bi-obj. - no alpha :: a} \qquad\\
		\st & \quad  \sum\nolimits_{k\in\cK}\|\mathbf{w}_{km}\|^2 \leq  P_{\mathtt{AP}_m}^{\max},\;\forall m\in\mathcal{M}, \label{eq: prob. bi-obj. - no alpha :: c} \\
				& \quad 
		R\bigl(\gamma_{k}^{\dl}(\mathcal{C},\mathbf{1})\bigr) \geq \bar{R}_{k}^{\dl}, \; \forall  k \in \mathcal{K}, \label{eq: prob. bi-obj. - no alpha :: e} \\
		&  \quad
		R\bigl(\gamma_{\ell}^{\ul}(\mathcal{C},\mathbf{1})\bigr) \geq \bar{R}_{\ell}^{\ul}, \; \forall \ell \in \mathcal{L}, \label{eq: prob. bi-obj. - no alpha :: f}\\
		&  \quad \eqref{eq: prob. general form bi-obj. trade-off :: c},\label{eq: prob. bi-obj. - no alpha :: g}
	\end{IEEEeqnarray}							
\end{subequations}
\endgroup
where $\bar{F}_{\mathtt{SE}}\bigl(\boldsymbol{\Gamma}_{\dl},\boldsymbol{\Gamma}_{\ul}\bigr)  \triangleq  R_{\Sigma}(\boldsymbol{\Gamma}_{\dl}) + R_{\Sigma}(\boldsymbol{\Gamma}_{\ul}),$
with $ R_{\Sigma}(\mathcal{X}) = \sum_{x\in\mathcal{X}} R(x) $. The signal-power ratio function is defined as
\begin{IEEEeqnarray}{cl}
f_{\mathtt{spr}}:\mathbf{w}\rightarrow \mathbf{r}^{\mathtt{sp}}\triangleq\bigl[r_{\mathtt{sp}}\bigl(\mathbf{w}_{km},\mathbf{\hat{h}}_{km}^{\dl}|\mathbf{w}_{k}^{(\kappa)},\mathbf{\hat{h}}_{k}^{\dl}\bigr)\bigr]_{\forall k\in\mathcal{K},m\in\mathcal{M}},\qquad
\end{IEEEeqnarray}
with $\mathbf{\hat{h}}_k^\dl\triangleq[\mathbf{\hat{h}}_{k1}^\dl,\cdots,\mathbf{\hat{h}}_{kM}^\dl]\in\mathbb{C}^{1\times N}$, and
\begin{align} \label{eq: consumed power rate function}
	r_{\mathtt{sp}}\bigl(\mathbf{x}_{1},\mathbf{c}_{1}|\mathbf{x}_{2},\mathbf{c}_{2}\bigr)\triangleq\frac{|\mathbf{c}_{1}\mathbf{x}_{1}|^2}{|\mathbf{c}_{2}\mathbf{x}_{2}|^2+\epsilon} \in [0,1) ,
\end{align}
where $\epsilon $ is a very small real number added to avoid a numerical problem, i.e., $ 10^{-6} $, and $ \mathbf{w}_{k}^{(\kappa)}$ is a feasible point of $ \mathbf{w}_{k}$ at the $\kappa$-th iteration of an iterative algorithm presented shortly. We can obtain $ \boldsymbol{\alpha} $ via a converter function, i.e., for $ \forall k\in\mathcal{K},m\in\mathcal{M} $,
\begin{align}\label{eq: Bfunction}
\alpha_{km}^* = \begin{cases}
1, & \text{if } r_{\mathtt{sp}}\bigl(\mathbf{w}_{km}^{*},\mathbf{\hat{h}}_{km}^{\dl}|\mathbf{w}_{k}^{*},\mathbf{\hat{h}}_{k}^{\dl}\bigr)>\varpi, \\
0, & \text{if } r_{\mathtt{sp}}\bigl(\mathbf{w}_{km}^{*},\mathbf{\hat{h}}_{km}^{\dl}|\mathbf{w}_{k}^{*},\mathbf{\hat{h}}_{k}^{\dl}\bigr)\leq\varpi,
\end{cases}
\end{align}
and the per-AP power signal ratio $\varpi\triangleq10^{-3}/M$ is a small number, and $ \mathbf{w}_{km}^{*} $ is the optimal solution of $ \mathbf{w}_{km}$. It is true that $ r_{\mathtt{sp}}\bigl(\mathbf{w}_{km}^{*},\mathbf{\hat{h}}_{km}^{\dl}|\mathbf{w}_{k}^{*},\mathbf{\hat{h}}_{k}^{\dl}\bigr)\leq\varpi $ yields  $\mathbf{w}_{km}^{*} \rightarrow \mathbf{0} $. Without loss of optimality, we can omit $ \boldsymbol{\alpha} $ in the following derivations.

\section{Proposed Solution Based on Zero-Forcing}\label{sec:ZF}

\subsection{ZF-Based Transmission Design}
For ease of presentation, we first let $\mathbf{W}\triangleq[\mathbf{w}_{1},\cdots,\mathbf{w}_{K}]\in\mathbb{C}^{N\times K}$, $\mathbf{H}^\dl\triangleq\bigl[(\mathbf{\hat{h}}_{1}^\dl)^H, \cdots,(\mathbf{\hat{h}}_{M}^\dl)^H\bigr]^H \in \mathbb{C}^{K\times N}$,  $ \mathbf{H}^{\ul}\triangleq\bigl[\mathbf{\hat{h}}_{1}^{\ul},\cdots, \mathbf{\hat{h}}_{L}^{\ul} \bigr]\in\mathbb{C}^{N\times L} $, $\mathbf{G}^{\mathtt{cci}}\triangleq\bigl[(\mathbf{g}_{1}^{\mathtt{cci}})^H,\cdots, (\mathbf{g}_{K}^{\mathtt{cci}})^H\bigr]^H\in \mathbb{C}^{K\times L}$ with $\mathbf{g}_{k}^{\mathtt{cci}}\triangleq\bigl[\hat{g}_{k1}^{\mathtt{cci}},\cdots,\hat{g}_{kL}^{\mathtt{cci}}\bigr]\in \mathbb{C}^{1\times L}$,  $ \mathbf{\tilde{G}}^{\AtoA}\triangleq\bigl[\mathbf{\bar{G}}_1^{\AtoA},\cdots,$ $\mathbf{\bar{G}}_M^{\AtoA}\bigr]\in\mathbb{C}^{N\times N} $,  and $\mathbf{D}^{\ul}\triangleq\diag\bigl(\bigl[\sqrt{p_{1}}\cdots \sqrt{p_{L}}\bigr]\bigr)$.

\subsubsection{ZF-Based  DL Transmission}
For $ \mathbf{H}^\tZF= (\mathbf{H}^\dl)^H\bigl(\mathbf{H}^\dl(\mathbf{H}^\dl)^H\bigr)^{-1}$, the ZF precoder matrix is simply computed as
$\mathbf{W}=\mathbf{W}^{\mathtt{ZF}} = \mathbf{H}^\tZF (\mathbf{D}^{\dl})^{\frac{1}{2}},
$
where  $ \mathbf{D}^{\dl}=\diag\bigl(\bigl[\omega_1\cdots\omega_K\bigr]\bigr) $ and $ \omega_{k} $ represents the  weight for $\DLU$. As a result, constraint  \eqref{eq: prob. bi-obj. - no alpha :: c} becomes
\begin{align} \label{eq: power constraint - ZF}
\tr\bigl((\mathbf{H}^{\tZF})^H\mathbf{B}_{m}\mathbf{H}^{\tZF}\mathbf{D}^{\dl}\bigr) \leq  P_{\mathtt{AP}_m}^{\max}, \;\forall m\in\mathcal{M},
\end{align}
where $ \mathbf{B}_{m} = \diag(\mathbf{b}_{m})\in \{0,1\}^{N\times N} $ with 
\begin{align}
\mathbf{b}_{m} = \bigl( \underbrace{0\cdots 0}_{\sum\nolimits_{m'=1}^{m-1}N_{m'}}\;\underbrace{1\cdots 1}_{N_m}\; 0 \cdots 0\bigr).
\end{align}
The simplicity of ZF is attributed  to the fact that the size of $NK$ scalar variables of $\mathbf{w}$ is now reduced to $K$ scalar variables of $\boldsymbol{\omega}\triangleq[\omega_1,\cdots,\omega_K]^T\in\mathbb{R}^{K\times 1}$. The SINR of $\DLU$ with ZF precoder is
\begin{align}\label{eq:ZFSINIDL}
\gamma_{k}^{\dl,\tZF}(\boldsymbol{\omega}, \mathbf{p}) = \frac{\omega_k|\mathbf{\hat{h}}_{k}^\dl\mathbf{h}_{k}^\tZF|^2}{\mathcal{I}_{k}^{\mathtt{err},\dl}+ \|\mathbf{g}_{k}^{\mathtt{cci}}\mathbf{D}^{\ul}\|^2+\sigma_{k}^2},
\end{align}
where $ \mathcal{I}_{k}^{\mathtt{err},\dl}\triangleq\sum_{m\in\mathcal{M}}\sum_{k'\in\mathcal{K}}\varepsilon_{k'm}^{\dl}\omega_{k'}|\mathbf{b}_{m}\mathbf{h}_{k'}^\tZF|^2+\sum_{\ell\in\mathcal{L}}p_{\ell}\varepsilon_{k\ell}^{\mathtt{cci}} $; $ \mathbf{h}_{k}^\tZF $ is the $ k $-th column of the ZF precoder $ \mathbf{H}^{\tZF} $ and the MUI term $ |\mathbf{\hat{h}}_{k}^\dl\mathbf{w}_{k'}|^2 \approx 0, \forall k'\in\cK\setminus\{k\}$. The relationship between  $ \boldsymbol{\omega} $ and $ \mathbf{W}^{\mathtt{ZF}} $ is characterized as
\begin{align}\label{eq:relationshipWOmega}
\mathbf{W}^{\mathtt{ZF}}=\mathbf{H}^{\tZF}\bigl(\diag(\boldsymbol{\omega})\bigr)^{\frac{1}{2}}.
\end{align}
Hence,  $ \mathbf{w}_{km} $ is recovered by extracting from the $ ((m-1)N_m+1) $-th to $ (mN_m) $-th elements of $ \mathbf{w}_k $ - the $ k $-th column of $ \mathbf{W}^{\mathtt{ZF}} $.


\subsubsection{ZF-Based  UL Transmission}
Let $ \mathbf{A}^{\mathtt{ZF}}=\bigl((\mathbf{H}^\ul)^H\mathbf{H}^\ul\bigr)^{-1}(\mathbf{H}^\ul)^H \in\mathbb{C}^{L\times N} $ be the ZF receiver matrix at the CPU. The SINR of $ \ULU $ is
\begin{align}\label{eq:ZFSINIUL}
\gamma_{\ell}^{\ul,\tZF}(\boldsymbol{\omega},\mathbf{p}) = \frac{p_{\ell}|\mathbf{a}_{\ell}^{\mathtt{ZF}}\mathbf{\hat{h}}_{\ell}^{\ul}|^2 }{\mathcal{I}_{\ell}^{\mathtt{err},\ul}+\|\mathbf{a}_{\ell}^{\mathtt{ZF}}\mathbf{\tilde{G}}^{\AtoA} \mathbf{W}^{\mathtt{ZF}}\|^2+\sigma_{\mathtt{AP}}^2\|\mathbf{a}_{\ell}^{\mathtt{ZF}}\|^2},
\end{align}
where $ \mathbf{a}_{\ell}^{\mathtt{ZF}} $ is the $\ell$-th row of $ \mathbf{A}^{\mathtt{ZF}}$, and $ \mathcal{I}_{\ell}^{\mathtt{err},\ul}\triangleq\sum_{\ell'\in\mathcal{L}} p_{\ell'} \varepsilon_{m\ell'}^{\ul} \|\mathbf{a}_{\ell}^{\mathtt{ZF}}\|^2 $.

\subsection{Proposed Algorithm}
Before proceeding, we provide some useful approximate functions as follows.
\begin{align} 
h_{\mathtt{fr}}(x,y) &\triangleq\frac{x^2}{y} \geq \frac{2x^{(\kappa)}}{y^{(\kappa)}}x - \frac{(x^{(\kappa)})^2}{(y^{(\kappa)})^2}y := h_{\mathtt{fr}}^{(\kappa)}(x,y), \\
h_{\mathtt{qu}}(x)&\triangleq x^2 \geq 2x^{(\kappa)} x - (x^{(\kappa)})^2 := h_{\mathtt{qu}}^{(\kappa)}(x). 
\end{align}

As in \cite[Theorem 3]{nguyen2019spectral}, problem \eqref{eq: prob. bi-obj. - no alpha} based on ZF is rewritten as the following problem
\begingroup
\allowdisplaybreaks\begin{subequations} \label{eq: prob. bi-obj. - equiZF}
	\begin{IEEEeqnarray}{cl}
		\underset{\boldsymbol{\omega}, \mathbf{p},\boldsymbol{\lambda}}{\max} &\quad  \tilde{F}_{\mathtt{SE}}\bigl(\boldsymbol{\Lambda}_{\dl},\boldsymbol{\Lambda}_{\ul}\bigr)		\label{eq: prob. bi-obj. - equiZF :: a} \\
		\st	
		& \quad \gamma_{k}^{\dl,\tZF}(\boldsymbol{\omega}, \mathbf{p})\geq\lambda_{k}^{\dl}, \; \forall  k \in \mathcal{K},\label{eq: prob. bi-obj. - equiZF :: c} \\
		& \quad \gamma_{\ell}^{\ul,\tZF}(\boldsymbol{\omega}, \mathbf{p})\geq\lambda_{\ell}^{\ul}, \; \forall  \ell \in \mathcal{L}, \label{eq: prob. bi-obj. - equiZF :: d} \\
		& \quad \lambda_{k}^{\dl} + 1 \geq \exp(\bar{R}_{k}^{\dl}), \; \forall  k \in \mathcal{K},   \label{eq: prob. bi-obj. - equiZF :: e} \\
		& \quad \lambda_{\ell}^{\ul} + 1 \geq \exp(\bar{R}_{\ell}^{\ul}), \; \forall  \ell \in \mathcal{L}, \label{eq: prob. bi-obj. - equiZF :: f} \\
		&\quad \eqref{eq: prob. general form bi-obj. trade-off :: c}, \eqref{eq: power constraint - ZF}, \label{eq: prob. bi-obj. - equiZF :: g} 
	\end{IEEEeqnarray}							
\end{subequations}\endgroup
where  $\tilde{F}_{\mathtt{SE}}\bigl(\boldsymbol{\Lambda}_{\dl},\boldsymbol{\Lambda}_{\ul}\bigr)\triangleq \ln|\mathbf{I} + \boldsymbol{\Lambda}_{\dl} | + \ln|\mathbf{I} + \boldsymbol{\Lambda}_{\ul} |$ is a concave function, with $ \boldsymbol{\Lambda}_{\dl}\triangleq\diag([\lambda_{1}^{\dl}\cdots\lambda_{K}^{\dl}]) $ and $ \boldsymbol{\Lambda}_{\ul}\triangleq\diag([\lambda_{1}^{\ul}\cdots\lambda_{L}^{\ul}]) $; $\boldsymbol{\lambda}\triangleq\{\boldsymbol{\lambda}_\dl,\boldsymbol{\lambda}_\ul\}$ with $ \boldsymbol{\lambda}_{\dl}\triangleq\{\lambda_{k}^{\dl}\}_{\forall k\in\mathcal{K}} $ and $ \boldsymbol{\lambda}_{\ul}\triangleq\{\lambda_{\ell}^{\dl}\}_{\forall \ell\in\mathcal{L}} $ is newly introduced variables.

In problem \eqref{eq: prob. bi-obj. - equiZF}, the nonconvex parts include \eqref{eq: prob. bi-obj. - equiZF :: c} and \eqref{eq: prob. bi-obj. - equiZF :: d}. We introduce the new variables as $\boldsymbol{\psi^{\dl}}\triangleq\{\psi_{k}^{\dl}\}_{\forall k\in\cK}$ and $\boldsymbol{\psi^{\ul}}\triangleq\{\psi_{\ell}^{\ul}\}_{\forall \ell\in\cL}$.
Based on the IA method, constraint \eqref{eq: prob. bi-obj. - equiZF :: c} is convexified as the two following linear constraints:
\begin{subequations} \label{eq: SINR cons DL convex.}
\begin{align}
	& h_{\mathtt{fr}}^{(\kappa)}(\sqrt{\omega_k},\psi_{k}^{\dl})  \geq \lambda_{k}^{\dl},\;\forall k\in\mathcal{K}, \\
	& \psi_{k}^{\dl}  \geq \frac{\mathcal{I}_{k}^{\mathtt{err},\dl}+\|\mathbf{g}_{k}^{\mathtt{cci}}\mathbf{D}^{\ul}\|^2+\sigma_{k}^2}{|\mathbf{\hat{h}}_{k}^\dl\mathbf{h}_{k}^{\tZF}|^2}, \forall k\in\mathcal{K},
\end{align}
\end{subequations}
while constraint \eqref{eq: prob. bi-obj. - equiZF :: d} is iteratively replaced by
\begin{subequations} \label{eq: prob. bi-obj. - equiZF :: dConcex}
	\begin{align} 
	&h_{\mathtt{fr}}^{(\kappa)}(\sqrt{p_{\ell}},\psi_{\ell}^{\ul})  \geq \lambda_{\ell}^{\ul},\;\forall \ell\in\mathcal{L}, \label{eq: prob. bi-obj. - equiZF :: dConcex:a}\\
	&\psi_{\ell}^{\ul} \geq \frac{\mathcal{I}_{\ell}^{\mathtt{err},\ul}+\|\mathbf{a}_{\ell}^{\mathtt{ZF}}\mathbf{\tilde{G}}^{\AtoA} \mathbf{W}^{\mathtt{ZF}}\|^2+\sigma_{\mathtt{AP}}^2\|\mathbf{a}_{\ell}^{\mathtt{ZF}}\|^2}{|\mathbf{a}_{\ell}^{\mathtt{ZF}}\mathbf{\hat{h}}_{\ell}^{\ul}|^2}, \forall \ell\in\mathcal{L}.\label{eq: prob. bi-obj. - equiZF :: dConcex:b}
	\end{align}
\end{subequations}

With the above discussions based on the IA method,  we obtain the following approximate problem of \eqref{eq: prob. bi-obj. - equiZF} with the convex set solved  at iteration $ (\kappa+1) $:
\begin{subequations} \label{eq: prob. bi-obj. - frac. prog.}
	\begin{IEEEeqnarray}{cl}
		\underset{\boldsymbol{\omega}, \mathbf{p},\boldsymbol{\lambda},\boldsymbol{\psi}}{\max} &\quad  \ddot{F}^{(\kappa+1)}\triangleq\tilde{F}_{\mathtt{SE}}\bigl(\boldsymbol{\Lambda}_{\dl},\boldsymbol{\Lambda}_{\ul}\bigr) \label{eq: prob. bi-obj. - frac. prog. :: a} \\
		\st&  \quad \eqref{eq: prob. general form bi-obj. trade-off :: c}, \eqref{eq: power constraint - ZF}, \eqref{eq: prob. bi-obj. - equiZF :: e}, \eqref{eq: prob. bi-obj. - equiZF :: f}, \eqref{eq: SINR cons DL convex.}, \eqref{eq: prob. bi-obj. - equiZF :: dConcex},
		 \label{eq: prob. bi-obj. - frac. prog. :: b} \qquad
	\end{IEEEeqnarray}							
\end{subequations}
where $\boldsymbol{\psi}\triangleq\{\boldsymbol{\psi^{\dl}},\boldsymbol{\psi^{\ul}}\}$. Clearly, the set of variables in \eqref{eq: prob. bi-obj. - frac. prog.} is independent of the numbers of APs (antennas) and all constraints are linear.
The proposed algorithm for solving the ZF-based SE problem \eqref{eq: prob. general form bi-obj. trade-off} is summarized in Algorithm \ref{alg: ZFD problem}, where the convergence and complexity analysis are given in \cite{nguyen2019spectral}.

\begin{algorithm}[t]
	\begin{algorithmic}[1]
				\fontsize{9}{10}\selectfont
		\protect\caption{Proposed Algorithm to Solve ZF-based SE Problem \eqref{eq: prob. general form bi-obj. trade-off}}
		
		\label{alg: ZFD problem}
		
		\STATE \textbf{Initialization:} Compute matrices $\mathbf{H}^{\tZF}$ and $\mathbf{A}^{\tZF}$. 
		\STATE Set $ \ddot{F}^{(0)}=-\infty $, $\kappa:=0$, and  generate $(\boldsymbol{\omega}^{(0)}, \mathbf{p}^{(0)},\boldsymbol{\psi}^{(0)})$.
		
		\REPEAT[Solving \eqref{eq: prob. bi-obj. - equiZF}]
		\STATE Solve \eqref{eq: prob. bi-obj. - frac. prog.} to obtain $(\boldsymbol{\omega}^{\star}, \mathbf{p}^{\star},\boldsymbol{\lambda}^{\star},\boldsymbol{\psi}^{\star})$ and $ \ddot{F}^{(\kappa+1)} $.
		
		\STATE Update $(\boldsymbol{\omega}^{(\kappa+1)}, \mathbf{p}^{(\kappa+1)},\boldsymbol{\psi}^{(\kappa+1)}) :=(\boldsymbol{\omega}^{\star}, \mathbf{p}^{\star},\boldsymbol{\psi}^{\star})$.
		\STATE Set $ \kappa := \kappa + 1 $.
		\UNTIL Convergence \COMMENT{$ \ddot{F}^{(\kappa)} - \ddot{F}^{(\kappa-1)} < 10^{-3} $}
		
		\STATE Update $(\boldsymbol{\omega}^{*}, \mathbf{p}^{*}) := (\boldsymbol{\omega}^{(\kappa)}, \mathbf{p}^{(\kappa)} )$.
		
		\STATE Use \eqref{eq:relationshipWOmega} to recover $\mathbf{w}_{km},\;\forall k\in\cK, m\in\cM$.
		
		\STATE Compute $ \boldsymbol{\alpha}^{*} $ as in \eqref{eq: Bfunction}.
		
		\STATE Repeat Steps 1-9 with $ \boldsymbol{\alpha}^{*} $ to find the exact $(\mathbf{w}^{*}, \mathbf{p}^{*})$.
		
		\STATE Use $ (\mathbf{w}^{*}, \mathbf{p}^{*},\boldsymbol{\alpha}^{*}) $ to compute $ F_{\mathtt{SE}}(\mathbf{w}^{*}, \mathbf{p}^{*},\boldsymbol{\alpha}^{*}) $ in \eqref{eq: prob. general form bi-obj. trade-off :: a}.

	\end{algorithmic} 
\end{algorithm}

%
%



\section{Numerical Results}\label{NumericalResults}

\begin{table}[!t]
	\centering
	\captionof{table}{Simulation Parameters}
	\label{tab: parameter}
	\vspace{-5pt}
	\scalebox{0.8}{
		\begin{tabular}{l|l}
			\hline
			Parameter & Value \\
			\hline\hline
			System bandwidth, $B$ & 10 MHz \\
			Residual SiS, $\rho^{\RSI}=\rho_{mm}^{\RSI},\;\forall m$& -110 dB \cite{Korpi:IEEETAP:Feb2017}\\
			Noise power at receivers & -104 dBm \\
			Number of APs and UEs, $ M $ & 64\\
			Number of antennas per AP, $ N_m,\forall m $ & 2\\
			Rate threshold, $  \bar{R}=\bar{R}_{k}^{\dl}=\bar{R}_{\ell}^{\ul} ,\;\forall k, \ell $	&  0.5 bits/s/Hz\\
			{\color{black}Power budget at UL UEs, $P_{\ell}^{\max},\forall\ell $} & 23 dBm \\
			{\color{black}Total power budget for all APs, $MP_{\mathtt{AP}}^{\max}$}  & 43 dBm \\
			\hline		   				
		\end{tabular}
	}
\end{table}

We consider a system topology with all  APs and UEs located within a circle of 1-km radius. The entries of the fading loop channel $ \mathbf{G}^{\SI}_{mm},\forall m\in\mathcal{M} $ are modeled as i.i.d. Rician RVs, with the Rician factor of $ 5 $ dB \cite{Dinh:JSAC:18}. The large-scale fading of other channels is modeled as $ \beta = 10^{\frac{\mathtt{PL}(d)+\sigma_{\mathtt{sh}}z }{10}} $, where $ \beta\in\{ \beta_{mm'}^{\AtoA},\beta_{km}^{\dl}, \beta_{m\ell}^{\ul},\beta_{k\ell}^{\mathtt{cci}} \} $, $\forall  m,m'\in\mathcal{M},k\in\mathcal{K},\ell\in\mathcal{L} $ and $ m\neq m' $; The shadow fading is considered as an RV $ z\in\{ z_{mm'}^{\AtoA},z_{km}^{\dl},z_{m\ell}^{\ul},z_{k\ell}^{\mathtt{cci}} \} \sim \mathcal{N}(0,1) $ with standard deviation $ \sigma_{\mathtt{sh}} =8$  dB. The three-slope model for the
path loss in dB is given by \cite{Ngo:TWC:Mar2017}.
The other parameters are given in Table~\ref{tab: parameter}, where all APs have the same power budget $P_{\mathtt{AP}}^{\max}=P_{\mathtt{AP}_m}^{\max},\forall m$. The SEs are divided by $ \ln2 $ to be presented in bits/s/Hz.

For comparison, the heap-/random-based training schemes and ZF-based robust design under FD operation (\textit{``Heap-FD + ZF-RD''} and \textit{``Rand-FD + ZF-RD''}) are employed to evaluate the  performance of those schemes, referenced to two approaches of Heap-HD (\textit{``Heap-HD + ZF-RD''}) and Heap-FD with non-robust design (\textit{``Heap-FD + ZF-NRD''}). To show the effectiveness of the proposed ZF-based transmission, we additionally examine the following transmission strategies:
\begin{enumerate}
	\item ``Perf. CSI + ZF:'' The perfect CSI is used to compute the ZF precoder and detector for DL and UL transmissions, respectively.
	\item The proposed heap-FD is employed for training, and then, maximum ratio transmission/combining (MRT/MRC) is applied to DL/UL transmission, called ``Heap-FD + MRT/MRC-RD.'' 
\end{enumerate}

\begin{figure}[t]
	\centering
	\includegraphics[width=0.72\columnwidth,trim={0cm 0cm 0cm 0cm}]{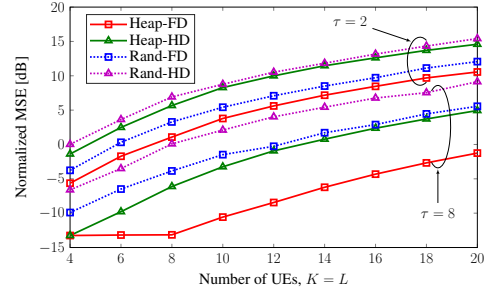}
	\vspace{-10pt}
	\caption{Normalized MSE versus the number of UEs with  $ \tau\in\{2, 8 \}$.}
	\label{fig: NMSE vs No UEs}
\end{figure}

\begin{figure}[t]
	\vspace{-10pt}
	\centering
	\includegraphics[width=0.72\columnwidth,trim={0cm 0cm 0cm 0cm}]{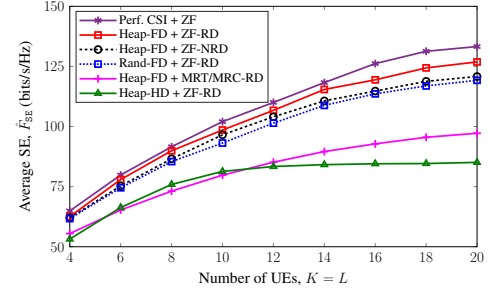}
	\vspace{-10pt}
	\caption{Average SE versus the number of UEs in CF-mMIMO with $ \tau=8 $.}
	\label{fig: SE vs No UEs}
\end{figure}

It can be easily foreseen that the quality of channel estimates mainly depends on the relationship between the number of UEs and  dimension of pilot set (or pilot length, $ \tau $). To evaluate the performance of the proposed FD training strategy, we first investigate the normalized MSE (NMSE) as a function of the number of UEs. As depicted in Fig. \ref{fig: NMSE vs No UEs}, we consider four strategies: two heap structures for pilot assignment (Heap-FD and Heap-HD) and two random pilot assignments (Rand-FD and Rand-HD). As expected, the proposed heap training schemes outperform the random ones. It can also be observed that  FD training strategies offer better performance in terms of NMSE compared to HD ones, by exploiting larger dimension of pilot sequences more efficiently. In particular, when $ K=L>\tau $, NMSE of the proposed Heap-FD  is around 5  dB and 7 dB less than Heap-HD, corresponding to $ \tau=2 $ and $ \tau= 8 $, respectively. 

It should be noted that the FD training strategy requires double training time over its HD counterpart, leading to the difference of the effective time for data transmission. The SE under imperfect CSI can be expressed as
$
	\hat{F}_{\mathtt{SE}} = \frac{\tau_{\mathtt{c}}-\tau_{\mathtt{t}}}{\tau_{\mathtt{c}}}F_{\mathtt{SE}}\bigl(\mathbf{w}, \mathbf{p},\boldsymbol{\alpha}\bigr),
$
where $\tau_{\mathtt{c}}$ and $\tau_{\mathtt{t}}$ are the coherent time and training time, respectively. We now plot the SE performance for the worst-case robust design by taking into account the CSI errors. In Fig. \ref{fig: SE vs No UEs}, we set $ \tau_{\mathtt{c}}=200 $,  $ \tau_{\mathtt{t}}=2\tau $ for  FD and $ \tau_{\mathtt{t}}=\tau $ for HD. Unsurprisingly,  Heap-FD schemes outperform the HD one, and their performance gaps are even more remarkable when the number of UEs increases, i.e., at $ K=L=20 $ gaining around 7 and 9 bits/s/Hz as compared to non-robust and Rand-FD schemes, respectively. This again demonstrates the effectiveness of the proposed Heap-based pilot assignment algorithm for FD CF-mMIMO by reaping both the advantages of  higher dimension of pilot sequences for training and FD for data transmission.

\begin{figure}[t]
	\centering
	\includegraphics[width=0.8\columnwidth,trim={0cm 0cm 0cm 0cm}]{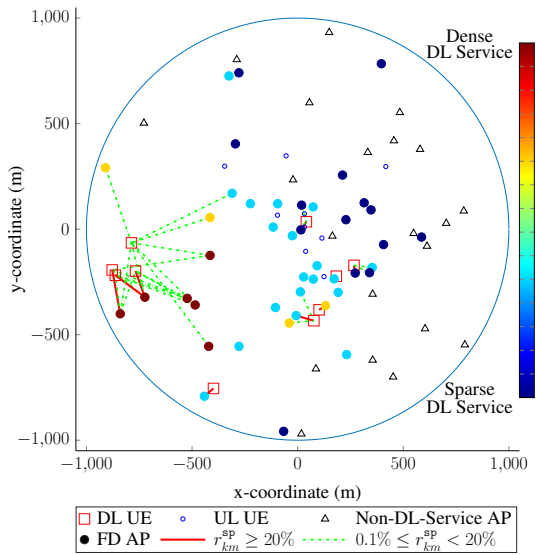}
	\vspace{-5pt}
	\caption{Service map for DL transmission, where the strong AP-DL UE connections are distinguished by $r_{km}^{\mathtt{sp}}$ (known as $ r_{\mathtt{sp}}\bigl(\mathbf{w}_{km}^{*},\mathbf{\hat{h}}_{km}^{\dl}|\mathbf{w}_{k}^{*},\mathbf{\hat{h}}_{k}^{\dl}\bigr) $ in \eqref{eq: Bfunction} for short), with $ K=L=10 $ and $\varpi=10^{-3}/M\approx0.002\%$.}
	\label{fig: DL UE Service Map}
\end{figure}

Differently from UL transmission where the APs passively receive the signals from UL UEs, in DL transmission, the CPU computes the AP-DL UE associations to decide which APs would serve DL UEs. {\hilidraf To deeply obtain insights into the proposed algorithm, Fig. \ref{fig: DL UE Service Map} presents the service map for DL UEs. In this figure, the DL service density of an AP (on the color bar) is determined by the number of DL UEs served by that AP. Fig. \ref{fig: DL UE Service Map} also indicates the significant connections between APs and DL UEs as given by $ r_{\mathtt{sp}}\bigl(\mathbf{w}_{km}^{*},\mathbf{\hat{h}}_{km}^{\dl}|\mathbf{w}_{k}^{*},\mathbf{\hat{h}}_{k}^{\dl}\bigr) $ in \eqref{eq: Bfunction}. It can be seen that the strong connections are dynamically established among APs and DL UEs in close distances (or with good channel conditions). This phenomenon further verifies the selection of favorable channels discussed in Section \ref{OptimizationProblemDesign}-B}.

\section{Conclusion}\label{Conclusion}
{\hilidraf We have studied an FD CF-mMIMO network, where  power control and AP-UE association are jointly optimized under channel uncertainty. First, we have proposed a novel and low-complexity pilot assignment algorithm based on the  heap structure to improve the quality of channel estimates. Then, we have introduced the generalized robust design taking into account the CSI errors. The special relationship between binary and continuous variables has been  exploited to devise the optimal solution for the ZF-based robust problem. Numerical results have demonstrated that our proposed algorithms outperform the existing robust designs. The effectiveness in handling the AP-DL UE associations has been also verified by a service map.}


\begingroup
\setstretch{0.99}
\bibliographystyle{IEEEtran}
\bibliography{IEEEfull}
\endgroup

\end{document}